\newtheorem{theorem}{Theorem}
\newtheorem{definition}{Definition}
\newtheorem{example}{Example}
\providecommand{\tabularnewline}{\\}
\title{
  Analysis of Regression Tree Fitting Algorithms \\
  in Learning to Rank
}
\author{
  Tian Xia \\
  Wright State University \\
  \texttt{SummerRainET2008@gmail.com} \\
  \And
  Shaodan Zhai \\
  Wright State University \\
  \texttt{ShaodanZhai@gmail.com} \\
  \And
  Shaojun Wang \\
  Wright State University \\
  \texttt{SWang.USA@gmail.com} \\
}
\begin{document}
  \maketitle

  \begin{abstract}
    In learning to rank area, industry-level applications have been dominated by
    gradient boosting framework, which fits a tree using least square error 
    principle. While in classification area, another tree fitting principle,
    weighted least square error, has been widely used, such as
    LogitBoost and its variants. However, there is a lack of analysis on the
    relationship between the two principles in the scenario of learning to rank.
    We propose a new principle named least objective loss based error that
    enables us to analyze the issue above as well as several important learning
    to rank models. We also implement two typical and strong systems and
    conduct our experiments in two real-world datasets. Experimental results
    show that our proposed method brings moderate improvements over least square
    error principle.

  \end{abstract}

  \section{Introduction}

  Top practical learning to ranking systems are adopting gradient boosting
  framework and using regression trees as weak learners. These systems
  performed much better than linear systems on real-world datasets, such
  as Yahoo challenge 2010 (\cite{chapelle2011yahoo}), and Microsoft 30K
  (\cite{tan2013direct}). Among these systems, LambdaMART
  (\cite{wu2010adapting,burges2011learning}), a pair-wise
  based model, gained an excellent reputation in Yahoo challenge; MART
  (\cite{friedman2001greedy}), a point-wise based, is a regression model
  which utilizes least square loss as objective loss function, and McRank
  (\cite{li2007mcrank})\footnote{Li et al. call the model of McRank as MART in the
  scenario of classification.},
  a point-wise based, uses multi-class classification technique and
  converts predictions into ranking. For industry application, gradient
  boosting combined with regression trees appears to be a standard practice. 

  An important finding was made in the work of (\cite{cossock2006subset,li2007mcrank})
  that has created a bridge between learning to rank and classification. They
  proved that an important measure NDCG in learning to rank could be bounded
  by multi-class classification error. 
  This insight opens a door for learning to rank, as we could borrow
  state-of-the-art classification techniques.

  Least square error (SE) principle in fitting a regression tree only utilizes
  the first-order information of objective loss function, while in multi-class 
  classification area, there is a work that fits a regression tree by use of 
  second-order information besides that. Their tree fitting principle is called 
  weighted least square error (WSE). LogitBoost (\cite{friedman2000special}) and 
  its robust versions (\cite{li2010fast,li2012robust}) are examples.
  A comparison between gradient boosting using SE and
  LogitBoost using WSE for classification task (\cite{friedman2001greedy})
  shows that the latter is slightly better. As WSE is empirically considered
  as unstable in practice (\cite{friedman2000special,li2012robust,
  friedman2001greedy}), \cite{li2012robust} obtained a stable form of 
  WSE, called RWSE. 

  \begin{figure}[t]
    \begin{centering}
      \includegraphics[scale=0.3]{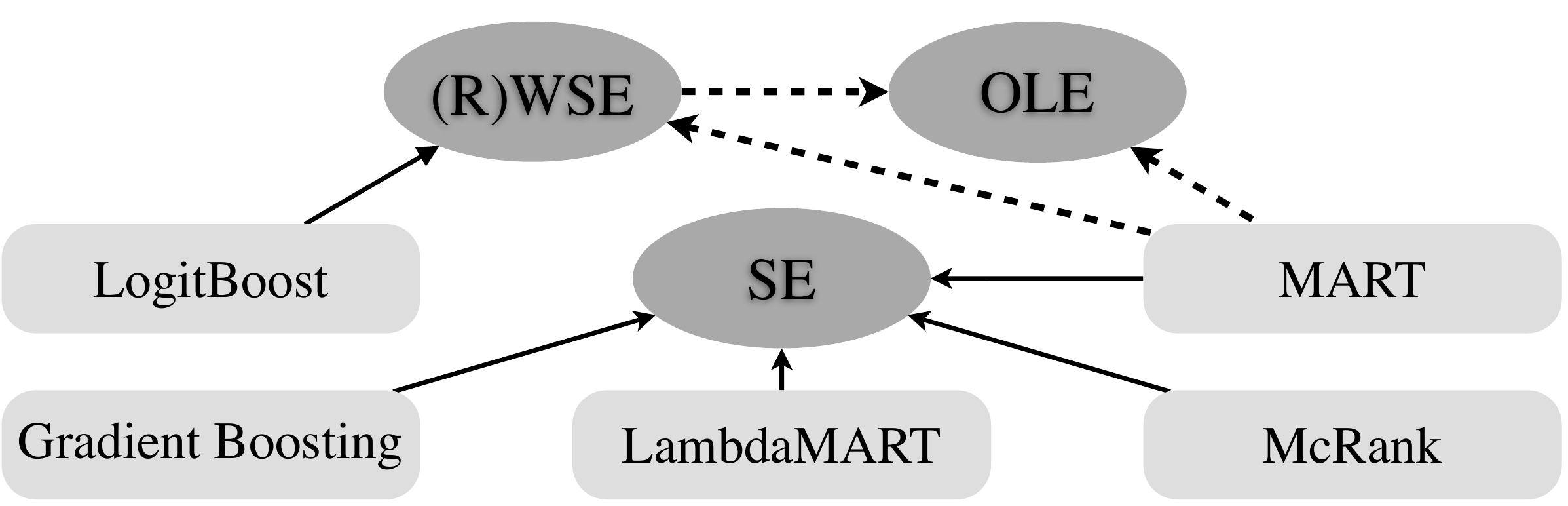}
      \par
    \end{centering}

    \protect\caption{
      Dark shadow marked ellipses denote tree fitting principles, and
      rectangles denote ranking models. SE: least square error; RWSE: robust
      weighted square error; OLE: objective loss based error in this
      paper. All real-line arrows denote known relations, and dotted-line
      arrows denote our contribution.
    }
    \label{fig_relation}
  \end{figure}

  However, both WSE and RWSE have no clear theoretical explanation, and somewhat
  hard to understand. As a result, Li thus proposed an interesting
  question in Section 2.3 of (\cite{li2012robust}): \emph{in determining
  the output of a leaf of a regression tree, the one-step Newton formula
  from gradient boosting coincides with the weighted averaging from
  WSE.} Moreover, RWSE looks concise, which might be mistaken to apply to any
  ranking model besides LogitBoost and its variants.
  These issues drive us to consider from another point of view.

  We propose a general regression tree fitting principle for ranking
  models, called least objective loss based error (OLE). It only requires
  simple computation to derive exact formula and is easy to understand.
  Under this principle, we first clearly answer the aforementioned
  question, and then analyze a variety of ranking systems to build a relationship
  between SE, (R)WSE and OLE (Figure \ref{fig_relation}).
  Experiments in real-world datasets also show moderate improvements.

  \section{Background}

  Given a set of queries $\mathbf{Q}=\{q_{1},\ldots,q_{|\mathbf{Q}|}\}$,
  each query $q_{i}$ is associated with a set of candidate relevant
  documents $\mathbf{D}_{i}=\{d_{1}^{i},\ldots,d_{|\mathbf{D}_{i}|}^{i}\}$
  and a corresponding vector of relevance scores $\mathbf{Y}_{i}=\{y_{i}(d_{1}^{i}),\ldots,y_{i}(d_{|\mathbf{D}_{i}|}^{i})\}$
  for each $\mathbf{D}_{i}$. The relevance score is usually an integer,
  assuming $0 \le y \le K$. Greater value means more related for the document
  to the query. An $M$-dimensional feature vector $\mathbf{h}(d)=[h_{1}(d|q),\ldots,h_{M}(d|q)]^{T}$
  is created for each query-document pair, where $h_{t}(\cdot)$s are
  predefined feature functions, which usually return real values.

  A ranking function $f$ is designed to score each query-document pair,
  and the documents associated with the same query are ranked by their
  scores and returned to users. Since these documents have a fixed ground
  truth rank with its corresponding query, our goal is to find an optimal
  ranking function which returns such a rank of related documents that
  is as close to the ground truth rank as possible. Industry-level applications
  often adopt regression trees to construct the ranking function, and
  use Newton Formula to calculate the output values of leaves of trees.

  Since fitting a regression tree is conducted node by node, in this
  paper we only discuss the differences of algorithms running on each
  node. We call them node-splitting algorithms. To distinguish them 
  from the objective loss functions of ranking models, we refer to
  the loss of each node-splitting algorithm as the "error".  For
  example, square error (Section \ref{sec_least_squared_splitting}),
  (robust) weighted square error (Section \ref{sec_weighted_least_squared},
  \ref{sec_robust_weighted_least}), and the objective loss based error
  (Section \ref{sec_Objective_Loss_Based}) proposed in this work. Obviously,
  if there is a minor difference in the split of a tree node for two
  node-splitting algorithms, they would lead to two totally different
  regression trees. 

  Several measures have been used to quantify the quality of the rank
  with respect to a ground-truth rank such as NDCG, ERR, MAP etc. In
  this paper, we use the most popular NDCG and ERR (\cite{chapelle2011yahoo})
  as the performance measures.

  \subsection{Square Error (SE) in Gradient Boosting \label{sec_least_squared_splitting}}

  We regard gradient boosting (\cite{friedman2001greedy}) as a general
  framework for function approximation, and it is applicable for any differentiable
  loss function. Combining with regression
  trees as weak learners has been the most successful application in
  learning to rank area.

  Gradient boosting iteratively finds an additive predictor $f(\cdot)\in\mathcal{H}$,
  which minimizes the loss function $\mathcal{L}$. At the $t$th iteration,
  a new weak learner $g_{t}(\cdot)$ is selected to add to current predictor
  $f_{t}(\cdot)$. Then

  \begin{equation}
    f_{t+1}(\cdot)=f_{t}(\cdot)+\alpha g_{t}(\cdot)\label{eq_update}
  \end{equation}
  where $\alpha$ is the learning rate. 

  In gradient boosting, $g_{t}(\cdot)$ is chosen as the one most parallel
  to pseudo-response $r(\cdot)$, which is defined as the negative derivative
  of the loss function in functional space $\mathcal{H}$.

  \begin{equation}
    r(\cdot)=-\mathcal{L}'(f_{t}(\cdot))\label{eq_g_def_in_sqrt}
  \end{equation}

  In practice, the globally optimal $g_{t}(\cdot)$ usually can not be obtained, then 
  we turn to operations on each tree node to fit a sub-optimal tree. 
  In one node, we enumerate all possible feature-threshold pairs, and then 
  the best one is selected to conduct binary splitting. This procedure recursively iterates on its child
  nodes until reaching a predefined condition.

  Regarding a feature, suppose a threshold $v$ to split samples on
  the current node into two parts. The samples, whose feature values
  are less than $v$, are denoted as $\mathbf{D}_{l}$, and others are
  denoted as $\mathbf{D}_{r}$. The squared error is defined as

  \begin{equation}
    SE(v)=\sum_{d\in\mathbf{D}_{l}}(r(d)-\overline{r_{1}})^{2}+\sum_{d\in\mathbf{D}_{r}}(r(d)-\overline{r_{2}})^{2}\label{eq_sqrt_loss}
  \end{equation}
  where $\overline{r_{1}}$, $\overline{r_{2}}$ are average pseudo-response
  of samples on the left and right respectively.

  The complexity of a regression tree could be controlled by limiting the tree height
  or leaf number. In learning to rank, the latter is more flexible,
  and adopted in this work by default. 

  \subsection{Weighted Square Error in Classification}

  \cite{cossock2006subset} proved that the negative unnormalized NDCG value is upper-bounded by multi-class
  classification error, where NDCG is an important measure in learning
  to rank. Thus \cite{li2007mcrank} proposed a multi-class
  classification based ranking systems $ $called McRank in gradient
  boosting framework.

  McRank utilizes classic logistic regression, which models class probability
  $p_{k}(d)$ as

  \begin{equation}
    p_{k}(d)=p(y(d)=k)=\frac{\exp f^{k}(d)}{\sum_{c=0}^{K}\exp f^{c}(d)}
  \end{equation}
  where $f^{c}(\cdot)$ is an additive predictor function for the $c$th
  class.

  The objective loss function is the negative log-likelihood, defined as

  \begin{equation}
    \mathcal{L}=-{\sum_{d}}{\sum_{k=0}^{K}\,}\textrm{I}(y(d)=c)\log p_{k}(d)
    \label{eq_log_loss}
  \end{equation}
  where $\textrm{I}(\cdot)$ is an indicator function.

  \subsubsection{Weighted Square Error (WSE)\label{sec_weighted_least_squared}}

  In classification, this loss function (Eqn. \ref{eq_log_loss}) resulted
  in the well-known system LogitBoost, which first used WSE to fit a
  regression tree. WSE utilizes both first- and second- order derivative
  information.

  WSE uses a different definition of the response value $r(\cdot)$
  from that in SE (Eqn. \ref{eq_g_def_in_sqrt}), and defines an extra
  weight $w(\cdot)$ for each sample. 

  \begin{equation}
    \begin{array}{clc}
      r(\cdot) & = & -\mathcal{L}'(f_{t}(\cdot))/\mathcal{L}''(f_{t}(\cdot))\\
      w(\cdot) & = & \mathcal{L}''(f_{t}(\cdot))
    \end{array}\label{eq_r_w_def_in_wsqrt}
  \end{equation}

  The splitting principle is minimizing the following weighted error

  \begin{equation}
    WSE(v) 
    = \left[\sum_{d\in\mathbf{D}_{l}}w(d)(r(d)-\overline{r_{1}})^{2}+\sum_{d\in\mathbf{D}_{r}}w(d)(r(d)-\overline{r_{2}})^{2}\right]
    - \sum_{d\in\mathbf{D}}w(d)(r(d)-\bar{r})^{2}\label{eq_wsqrt_loss}
  \end{equation}
  where

  \begin{equation}
    \overline{r_{1}} =\frac{\sum_{d\in\mathbf{D}_{l}}w(d)\cdot r(d)}{\sum_{d\in\mathbf{D}_{l}}w(d)} 
    ~~ \overline{r_{2}}=\frac{\sum_{d\in\mathbf{D}_{r}}w(d)\cdot r(d)}{\sum_{d\in\mathbf{D}_{r}}w(d)} 
    ~~ \overline{r} =\frac{\sum_{d\in\mathbf{D}}w(d)\cdot r(d)}{\sum_{d\in\mathbf{D}}w(d)}
    \label{eq_wsqrt_loss_w}
  \end{equation}

  However, Logitboost is usually thought as instable. Combining its loss function, the response and weight
  values, by Eqn. \ref{eq_r_w_def_in_wsqrt},
  are set as $\ensuremath{w(d)=p_{k}(d)(1-p_{k}(d))}$, $\ensuremath{r(d)=\frac{\textrm{I}(y(d)=k)-p_{k}(d)}{p_{k}(d)(1-p_{k}(d))}}$,
  in fitting a tree for the $k$th classification. The response $r(d)$
  might become huge and lead to unsteadiness when $p_{k}(d)$ in the denominator
  is close to $0$ or $1$. Though \cite{friedman2000special} described
  some heuristics to smooth the response values, LogitBoost was still
  believed numerically unstable (\cite{friedman2000special,friedman2001greedy,li2012robust}).
  As a result, McRank actually adopts SE and gradient boosting to
  fit regression trees (Section \ref{sec_least_squared_splitting}),
  rather than LogitBoost.

  \subsubsection{Robust Weighted Square Error (RWSE)\label{sec_robust_weighted_least}}

  \cite{li2012robust} derived Eqn. \ref{eq_wsqrt_loss} and proposed
  a stable version of WSE for LogitBoost, which is shown below

  \begin{equation}
    RWSE(v)
    = -\left[\frac{[\sum_{d\in\mathbf{D}_{l}}w(d)r(d)]^{2}}{\sum_{d\in\mathbf{D}_{l}}w(d)}+\frac{[\sum_{d\in\mathbf{D}_{r}}w(d)r(d)]^{2}}{\sum_{d\in\mathbf{D}_{r}}w(d)}\right]
    + \frac{[\sum_{d\in\mathbf{D}}w(d)r(d)]^{2}}{\sum_{d\in\mathbf{D}}w(d)}\label{eq_rwsqrt_loss}
  \end{equation}

  Since all denominators in Eqn. \ref{eq_rwsqrt_loss} are summation
  of a set of weights $w(\cdot)$, which are less likely to be close
  to zero in practical applications, and RWSE is hence more stable than
  WSE.

  After fitting a regression tree by either SE or (R)WSE, the data aggregated
  in the same leaf is assigned with a value by weighted averaging responses
  (Eqn. \ref{eq_wsqrt_loss_w}).

  Li proposed an interesting question regarding Eqn. \ref{eq_wsqrt_loss_w}.
  Li mentioned, Eqn. \ref{eq_wsqrt_loss_w} could be interpreted as
  a weighted average in (R)WSE; while in gradient boosting, it is interpreted
  as a one-step Newton update. It looks like a coincidence. In next
  section, we propose a unified splitting principle, which not only
  clearly explains the relationship of these principles, but also could
  be extended to more complex loss functions. Also, our method generates
  Li's robust version directly.

  \section{Greedy Tree Fitting Algorithm in Learning to Rank}

  \subsection{Objective Loss Based Error (OLE)\label{sec_Objective_Loss_Based}}

  We were motivated by the success of AdaBoost (\cite{schapire2012boosting}). In each iteration,
  AdaBoost selects the weak learner that has a minimal weighted error, and this
  can be proved that \emph{the weak learner selected
  ensures a maximum improvement of its objective loss}, so we are borrowing a similar strategy 
  in learning to rank. Note that, we would use totally different formulas.

  Exactly finding an optimal regression tree is computationally infeasible,
  as the number of possible trees is combinatorially huge. We thus turn
  to focus on the most basic unit in fitting a regression tree, that
  is how to conduct a good binary partition to improve the objective
  loss most. This is a more acceptable approach.

  Given a set of samples $\mathbf{D}=\{d_{1},\ldots,d_{|\mathbf{D}|}\}$,
  and a selected feature,  we first assume there are at most $|\mathbf{D}|-1$
  potential positions to define a threshold $v$ for the selected feature.
  Based on a threshold $v$, the samples are split into two parts, $\mathbf{D}_{l}$
  and $\mathbf{D}_{r}$. Second, we assume once a partition is conducted,
  the samples on the two sides would receive their updated score. In
  other words, we fit a temporary two-leaf tree in the current samples
  and update the outputs of two leaves separately (diagonal approximation).
  We update samples and calculate the objective loss, and allow the
  maximum improvement quantity as a measure for the current partition.
  The best partition and its respective threshold are selected after
  enumerating at most $|\mathbf{D}|-1$ possibilities. We ignore the
  fact that either side may not be a real leaf in the final fitted regression
  tree, so that we have a feasible method.

  Regarding a threshold $v$, let the outputs of the temporary two-leaf
  tree be $o_{1}$ and $o_{2}$, then the objective loss $\mathcal{L}$
  has become a function of $o_{1}$ and $o_{2}$. Once the values of
  $o_{1}$ and $o_{2}$ are determined, samples on two sides would be
  updated, and then the objective loss can be straightforwardly computed.
  However, even in a moderate size dataset, this computation is still
  prohibitive. So we approximate the objective loss with the Taylor
  formula in the second order at the point of 0.

  \begin{equation}
    \mathcal{L}(o) = \mathcal{L}(o = 0) + 
    o \cdot \mathcal{L}'(o = 0) + 
    \frac{o^{2}}{2}\mathcal{L}''(o = 0)\label{eq_taylor}
  \end{equation}
  where $o\in\{o_{1},o_{2}\}$.

  The local optimum $o$ can be obtained as $-\frac{\mathcal{L}'(o=0)}{\mathcal{L}''(o=0)}$
  by letting the first-order derivative $\mathcal{L}'(o)$ be zero.
  More specifically, $o_{1}$ and $o_{2}$ are optimized independently as following

  \begin{equation}
    \begin{cases}
      o_{1}=-\frac{\mathcal{L}'(o_{1}=0)}{\mathcal{L}''(o_{1}=0)} & \textrm{if}\,d\in\textrm{D}_{l}\\
      o_{2}=-\frac{\mathcal{L}'(o_{2}=0)}{\mathcal{L}''(o_{2}=0)} & \textrm{if}\,d\in\textrm{D}_{r}
    \end{cases}\label{eq_g_d}
  \end{equation}

  Insert Eqn. \ref{eq_g_d} into Eqn. \ref{eq_taylor}, and simplify
  to obtain our objective loss based error

  \begin{align}
    OLE(v)= & \mathcal{L}(o_{1}=0,o_{2}=0)
    -\frac{1}{2}\left[\frac{\mathcal{L}'(o_{1}=0)^{2}}{\mathcal{L}''(o_{1}=0)}+\frac{\mathcal{L}'(o_{2}=0)^{2}}{\mathcal{L}''(o_{2}=0)}\right]\\
    \propto & -\left[\frac{\mathcal{L}'(o_{1}=0)^{2}}{\mathcal{L}''(o_{1}=0)}+\frac{\mathcal{L}'(o_{2}=0)^{2}}{\mathcal{L}''(o_{2}=0)}\right]\label{eq_OL}
  \end{align}

  This resultant formula is not equivalent to SE or (R)WSE in a general
  case, and would lead to a totally different regression tree from that
  using other node-splitting algorithms.

  In the case of learning to rank, we analyze their equivalence for
  point-, pair- and list- wised based models.

  \subsection{Derivative Additive Loss Functions}

  In order to calculate all the gradients in Eqn. \ref{eq_OL} in an
  efficient left-to-right incremental updating way, we explore the cases
  the gradient $\mathcal{L}'(o=0)$, $\mathcal{L}''(o=0)$ can be decomposed
  into operations on each sample.
  \begin{definition}
    A loss function $\mathcal{L}$ is defined as derivative additive if
    $\mathcal{L}'(o=0|\mathbf{D})=\sum_{d\in\mathbf{D}}\mathcal{L}'(o=0|d)$
    and $\mathcal{L}''(o=0|\mathbf{D})=\sum_{d\in\mathbf{D}}\mathcal{L}''(o=0|d)$.
  \end{definition}
  \begin{example}
    The loss function of MART system is derivative additive, since 
  \end{example}

  \[
  \begin{array}{ll}
    \mathcal{L} & =\sum_{d}(f_{t}(d)-y(d))^{2}\\
    \\
    \mathcal{L}'(o=0) & =\frac{\partial\sum_{d}(f_{t}(d)+o-y(d))^{2}}{\partial o}=\sum_{d}2(f_{t}(d)-y(d))\\
    & =\sum_{d}\mathcal{L}'(o=0|d)\\
    \mathcal{L}''(o=0) & =\frac{\partial\sum_{d}(f_{t}(d+o-y(d))^{2}}{\partial^{2}o}=\sum_{d}2\\
    & =\sum_{d}\mathcal{L}''(o=0|d)
  \end{array}
  \]

  \begin{example}
    The loss function of McRank system is derivative additive, since
  \end{example}

  \[
  \begin{array}{ll}
    \mathcal{L} & ={\displaystyle \sum_{d}}{\displaystyle \sum_{c=0}^{K-1}}\textrm{I}(c=y(d))\log p_{c}(d)\\
    \\
    \mathcal{L}'(o=0) & =\sum_{d}\sum_{c=0}^{K-1}\{p_{c}(d)-I(c=y(d))\}\\
    & =\sum_{d}\mathcal{L}'(o=0|d)\\
    \mathcal{L}''(o=0) & =\sum_{d}\sum_{c=0}^{K-1}p_{c}(d)(1-p_{c}(d))\\
    & =\sum_{d}\mathcal{L}''(o=0|d)
  \end{array}
  \]

  \begin{example}
    The loss function of RankBoost system is not derivative additive,
    since
  \end{example}

  \[
  \mathcal{L}=\sum_{\mathbf{D}_{i}}\sum_{d_{1},d_{2}\in\mathbf{D}_{i},y(d_{1})>y(d_{2})}\exp\{f_{t}(d_{2})-f_{t}(d_{1})\}
  \]

  To clearly explain, we use a toy example with $d_{1},d_{2},d_{3}\in\mathbf{D}_{1}$,
  sorted by their relevances $y(d_{1})>y(d_{2})>y(d_{3})$. Assuming
  in current $t$th iteration, their score is $f_{t}(\cdot)$. The exponential
  loss is

  \[
  \begin{array}{ll}
    \mathcal{L} &
    =\exp\{f_{t}(d_{2})-f_{t}(d_{1})\}+\exp\{f_{t}(d_{3})-f_{t}(d_{1})\} + \exp\{f_{t}(d_{3})-f_{t}(d_{2})\}\} \\
    & =s_{1}+s_{2}+s_{3}
  \end{array}
  \]
  where, to simplify, $s_{1}=\exp\{f_{t}(d_{2})-f_{t}(d_{1})\}$, $s_{2}=\exp\{f_{t}(d_{3})-f_{t}(d_{1})\}$,
  $s_{3}=\exp\{f_{t}(d_{3})-f_{t}(d_{2})\}$.

  \[
  \begin{array}{ll}
    \textrm{as }\mathcal{L}(o|d_{1})=s_{1}\exp\{-o\} & +s_{2}\exp\{-o\}+s_{3}\\
    \textrm{so }\mathcal{L}'(o=0|d_{1})=-s_{1}-s_{2} & \mathcal{L}''(o=0|d_{1})=s_{1}+s_{2}\\
    \textrm{likewise }\mathcal{L}'(o=0|d_{2})=s_{1}-s_{3} & \mathcal{L}''(o=0|d_{2})=s_{1}+s_{3}\\
    \textrm{likewise }\mathcal{L}'(o=0|d_{3})=s_{2}+s_{3} & \mathcal{L}''(o=0|d_{3})=s_{2}+s_{3}
  \end{array}
  \]

  Suppose one partition is $\{d_{1},d_{2}\}$ and $\{d_{3}\}$, with
  the output value $o_{1}$, $o_{2}$ respectively, then the current
  loss is

  \[
  \begin{array}{ll}
    \mathcal{L}(o_{1},o_{2}) & =\exp\{(f_{t}(d_{2})+o_{1})-(f_{t}(d_{1})+o_{1})\}\\
    & +\exp\{(f_{t}(d_{3})+o_{2})-(f_{t}(d_{1})+o_{1})\}\\
    & +\exp\{(f_{t}(d_{3})+o_{2})-(f_{t}(d_{2})+o_{1})\}\}\\
    & +const\\
    \mathcal{L}'(o_{1}=0) & =-s_{2}-s_{3}\\
    \mathcal{L}''(o_{1}=0) & =s_{2}+s_{3}\\
    & \ne\mathcal{L}''(o_{1}=0|d_{1})+\mathcal{L}''(o_{2}=0|d_{2})\\
    & =s_{1}+s_{1}+s_{2}+s_{3}
  \end{array}
  \]

  The key reason is that \emph{if two samples appearing in the same $\exp$
  term of the objective loss also are classified into the same leaf,
  then they would receive the same output from current leaf, which does
  not contribute to the objective loss}. In this example, the two $s_{1}$,
  coming from $d_{1}$ and $d_{2}$, should be counteracted. As a result,
  the exponential loss is not derivative additive.
  \begin{example}
    The loss function of LambdaMART system is not derivative additive. 
  \end{example}
  This famous system has no explicit objective loss function, but has
  exact first- and second-order derivatives. Its first derivative has
  a similar unit with the RankBoost model, both having such terms of
  $\exp(f_{t}(d_{i})-f_{t}(d_{j}))$, since LambdaMART is pair-wise
  based system. Based on the detailed analysis in RankBoost, we could
  easily know the loss function of LambdaMART, potentially existing,
  is not derivative additive\footnote{To examine strictly, the one-step Newton formula in LambdaMART (Line
  11 in Alg. 1 of (\cite{wu2010adapting})) is incorrect conceptually,
  as the denominator in $\frac{\mathcal{L}'(o=0)^{2}}{\mathcal{L}''(o=0)}$
  is tackled as derivative additive. We have not found any explanations
  from their paper. But it can be viewed as an approximation of the
  exact formula.}. 
  \begin{example}
    The loss function of ListMLE is not derivative additive. 
  \end{example}
  As the term $\exp(f_{t}(d_{i})-f_{t}(d_{j}))$ is frequently appearing
  in its loss function, the loss function of ListMLE is also not derivative
  additive. 
  \begin{example}
    Some special list-wise models have derivative additive loss functions.
  \end{example}
  In the work of (\cite{ravikumar2011ndcg}), several point-wise based systems are
  modified by using list-wise information, so they are considered to
  be list-wise based systems, such as consistent-MART, consistent KL
  divergence based, consistent cosine distance based. As the extra list-wise
  information is actually utilized in a preprocessing step, and then
  they are running in a point-wise style, so these so-called list-wise
  based systems also own derivative additive loss functions. 

  \subsection{(R)WSE $\subset$ OLE}

  (R)WSE was proposed for LogitBoost, which is a classification system.
  However, from the angle of learning to rank, LogitBoost could be classified into
  point-wise based. We prove (R)WSE is actually a special case of OLE.
  \begin{theorem}
    Regarding derivative additive loss function, OLE is simplified into
    (R)WSE.\label{thm_Regarding_derivative_additive}
  \end{theorem}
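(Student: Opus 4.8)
The plan is to unwind the OLE formula (Eqn.~\ref{eq_OL}) under the derivative-additivity hypothesis, recognise each leaf term as the weighted quantity appearing in RWSE, and then recall that RWSE and WSE are algebraically the same object; hence OLE, RWSE and WSE all induce the same split.

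First I would apply the definition of a derivative additive loss to each side of the partition: for the left leaf, $\mathcal{L}'(o_1=0)=\sum_{d\in\mathbf{D}_l}\mathcal{L}'(o=0|d)$ and $\mathcal{L}''(o_1=0)=\sum_{d\in\mathbf{D}_l}\mathcal{L}''(o=0|d)$, and symmetrically for $\mathbf{D}_r$. The point ``$o=0$'' is precisely the current predictor $f_t$, since adding a uniform leaf output $o$ to every sample in a leaf shifts that sample's loss argument by $o$; so $\mathcal{L}'(o=0|d)=\mathcal{L}'(f_t(d))$ and $\mathcal{L}''(o=0|d)=\mathcal{L}''(f_t(d))$. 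Matching this against the WSE definitions in Eqn.~\ref{eq_r_w_def_in_wsqrt} gives $w(d)=\mathcal{L}''(f_t(d))$ and $w(d)r(d)=-\mathcal{L}'(f_t(d))$, whence $\mathcal{L}'(o_1=0)=-\sum_{d\in\mathbf{D}_l}w(d)r(d)$ and $\mathcal{L}''(o_1=0)=\sum_{d\in\mathbf{D}_l}w(d)$. Substituting into Eqn.~\ref{eq_OL}, each leaf term $\mathcal{L}'(o=0)^2/\mathcal{L}''(o=0)$ becomes $\big[\sum_{d}w(d)r(d)\big]^2/\sum_{d}w(d)$ over the corresponding leaf, so $OLE(v)\propto-\big[\frac{[\sum_{d\in\mathbf{D}_l}w(d)r(d)]^2}{\sum_{d\in\mathbf{D}_l}w(d)}+\frac{[\sum_{d\in\mathbf{D}_r}w(d)r(d)]^2}{\sum_{d\in\mathbf{D}_r}w(d)}\big]$.

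Next I would add the term $\frac{[\sum_{d\in\mathbf{D}}w(d)r(d)]^2}{\sum_{d\in\mathbf{D}}w(d)}$: because $\mathbf{D}=\mathbf{D}_l\cup\mathbf{D}_r$ is fixed while the threshold $v$ varies, this term is a constant, so adding it preserves ``$\propto$'' and reproduces exactly $RWSE(v)$ of Eqn.~\ref{eq_rwsqrt_loss}. To connect to WSE, I would expand the squares in Eqn.~\ref{eq_wsqrt_loss} using the weighted means of Eqn.~\ref{eq_wsqrt_loss_w}: each piece $\sum w(d)(r(d)-\bar r)^2$ equals $\sum w(d)r(d)^2-[\sum w(d)r(d)]^2/\sum w(d)$, and the pure quadratic parts $\sum_{\mathbf{D}_l}w(d)r(d)^2+\sum_{\mathbf{D}_r}w(d)r(d)^2$ add up to $\sum_{\mathbf{D}}w(d)r(d)^2$ and therefore cancel against the subtracted global term, leaving exactly $RWSE(v)$. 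Thus $WSE(v)=RWSE(v)$ and both are proportional to $OLE(v)$, which is the claim.

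The only nontrivial step is the identification in the middle: one must justify that the per-sample Taylor data $\mathcal{L}'(o=0|d),\mathcal{L}''(o=0|d)$ used by OLE genuinely coincide with the $\mathcal{L}'(f_t(d)),\mathcal{L}''(f_t(d))$ used to define $w$ and $r$, and this is exactly where derivative additivity is indispensable — the RankBoost, LambdaMART and ListMLE examples in the preceding subsection show the identification breaks once cross terms between co-leaf samples survive. Everything after that is elementary algebra, apart from the harmless observation that $w(d)=\mathcal{L}''(f_t(d))\ge 0$ is what makes the divisions (and the one-step Newton step) well posed.
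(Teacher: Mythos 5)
Your proof is correct and follows essentially the same route as the paper's: expand the numerators and denominators of Eqn.~\ref{eq_OL} using derivative additivity, identify the per-sample derivatives with $w(d)$ and $w(d)r(d)$ via Eqn.~\ref{eq_r_w_def_in_wsqrt}, and absorb the constant whole-node term to recover $RWSE(v)$. The only (minor) difference is that you verify the $WSE=RWSE$ identity explicitly by expanding the weighted squares, whereas the paper delegates that step to the cited result of Li (2012).
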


  \begin{proof}
    \[
    \begin{array}{ll}
      OLE(v) & (\textrm{Eqn. \ref{eq_OL}})\\
      = & -\left[\frac{\left[\sum_{d\in\mathbf{D}_{l}}\mathcal{L}'(o_{1}=0|d)\right]^{2}}{\sum_{d\in\mathbf{D}_{l}}\mathcal{L}''(o_{1}=0|d)}+\frac{\left[\sum_{d\in\mathbf{D}_{r}}\mathcal{L}'(o_{2}=0|d)\right]^{2}}{\sum_{d\in\mathbf{D}_{r}}\mathcal{L}''(o_{2}=0|d)}\right]\\
      & \textrm{by Eqn.\ref{eq_r_w_def_in_wsqrt}}\\
      \propto & -\left[\frac{\left[\sum_{d\in\mathbf{D}_{l}}w(d)\cdot r(d)\right]^{2}}{\sum_{d\in\mathbf{D}_{l}}w(d)}+\frac{\left[\sum_{d\in\mathbf{D}_{r}}w(d)\cdot r(d)\right]^{2}}{\sum_{d\in\mathbf{D}_{r}}w(d)}\right]+\textrm{const}\\
      & \textrm{where const}=\frac{\left[\sum_{d\in\mathbf{D}}w(d)\cdot r(d)\right]^{2}}{\sum_{d\in\mathbf{D}}w(d)}\\
      = & RWSE(v)\textrm{ (Eqn. \ref{eq_rwsqrt_loss})}
    \end{array}
    \]

    We simply obtain the robust weighted least square error from (\cite{li2012robust}).
    As Li proved the robust version is equivalent to original WSE, thus
    our method is equivalent to WSE for all derivative additive loss
    functions.
  \end{proof}

  By the explanation of robustness of Li, our OLE method (Eqn. \ref{eq_OL})
  is intrinsically robust, as all denominators are less likely to be
  zero in summing a set of samples.

  Recall that a model is classified into the point-wise category if
  the model does not use the relationship between samples, but only
  individual samples. See the typical point-wise based models, Example
  1 and 2. So, point-wise based systems have derivative additive loss
  functions, and in this case, (R)WSE is always equivalent to OLE.

  A pair-wise based system considers the relationship only between two
  samples. If a pair of associated samples are classified into two different
  tree nodes, the objective loss function is derivative additive; otherwise,
  it is not. In practical applications, it is not difficult to overcome
  this inconvenience by using an incremental updating.

  A list-wise based system would render more samples interact to each
  other, and it is relatively more difficult to tackle. But in splitting
  a tree node, the incremental updating is still working.

  We are now able to answer the question from (\cite{li2012robust}),
  the Eqn. \ref{eq_wsqrt_loss_w} appears to have two explanations,
  one from weighted average, and the other from one-step Newton. As
  we proved that (R)WSE is a special case derived from optimizing only
  derivative additive objectives, and LogitBoost uses derivative additive
  objective, so (R)WSE and SE are equivalent. Moreover, for other complex
  objective losses, (R)WSE may have no theoretical support, but it may
  serve as an approximation of our method.

  \subsection{SE = (R)WSE = OLE for MART \label{sec_SE_RWSE_OLE_MARTE}}

  SE is generally not equivalent to (R)WSE or OLE, even the objective
  loss functions used satisfy the condition of Theorem \ref{thm_Regarding_derivative_additive}.
  However, we find that MART is an ideal intersection of OLE, (R)WSE
  and SE. 

  MART system adopts least square loss as objective loss, and classic
  gradient boosting framework to fit regression trees. Many commercial
  search engines are using this model to construct their ranking systems.
  \begin{theorem}
    Regarding MART system, whose objective loss is the least-square $\sum_{d}|f(d)-y(d)|^{2}$,
    then $\arg\underset{v}{\min}SE(v)=\arg\underset{v}{\min}(R)WSE(v)=\arg\underset{v}{\min}OLE(v)$.
    \label{thm_Regarding_MART_system}
  \end{theorem}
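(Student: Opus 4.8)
The plan is to show that, restricted to the finite set of candidate thresholds $v$, each of $WSE(v)$, $RWSE(v)$ and $OLE(v)$ differs from $SE(v)$ only by a positive multiplicative factor and a split-independent additive constant; since such affine transforms preserve the set of minimizers, all four $\arg\min$s coincide.

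First I would record the per-sample derivatives for the least-square loss $\mathcal{L}=\sum_d (f_t(d)-y(d))^2$. Writing $\rho(d)=y(d)-f_t(d)$ for the residual, a one-line computation gives $\mathcal{L}'(o=0\mid d)=-2\rho(d)$ and $\mathcal{L}''(o=0\mid d)=2$. Two consequences matter. (i) The loss is derivative additive (this is Example 1), so Theorem \ref{thm_Regarding_derivative_additive} applies directly and $OLE(v)$ is proportional to $RWSE(v)$ up to an additive constant that depends only on the node's sample set $\mathbf{D}$, not on $v$. (ii) In the weighted criteria the weights $w(d)=\mathcal{L}''(f_t(d))=2$ are all equal, so every weighted average in Eqn. \ref{eq_wsqrt_loss_w} collapses to an ordinary average, and the WSE response is $r(d)=-\mathcal{L}'/\mathcal{L}''=\rho(d)$.

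Next I would align the remaining criteria. For SE the pseudo-response is $r(d)=-\mathcal{L}'(f_t(d))=2\rho(d)$, twice the WSE response, so $SE(v)=4\bigl[\sum_{d\in\mathbf{D}_l}(\rho(d)-\bar\rho_1)^2+\sum_{d\in\mathbf{D}_r}(\rho(d)-\bar\rho_2)^2\bigr]$ with plain averages $\bar\rho_1,\bar\rho_2$. Plugging the constant weight $2$ into Eqn. \ref{eq_wsqrt_loss} gives $WSE(v)=2\bigl[\sum_{d\in\mathbf{D}_l}(\rho(d)-\bar\rho_1)^2+\sum_{d\in\mathbf{D}_r}(\rho(d)-\bar\rho_2)^2\bigr]-2\sum_{d\in\mathbf{D}}(\rho(d)-\bar\rho)^2$, and the subtracted term is a function of the fixed node set $\mathbf{D}$ alone; hence $WSE(v)=\tfrac14 SE(v)+C_1$ with $C_1$ independent of $v$. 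For RWSE I would invoke Li's derivation (equivalently the identity $\sum w(r-\bar r)^2=\sum w r^2-(\sum w r)^2/\sum w$ applied to each side) showing $RWSE(v)=WSE(v)+C_2$, where the leftover $-\sum_{d\in\mathbf{D}}w(d)r(d)^2$ is again a node constant. Combining with (i), $OLE(v)=\lambda\,RWSE(v)+C_3$ for some $\lambda>0$. (One may equally verify (i) by hand: $\mathcal{L}'(o_1=0)^2/\mathcal{L}''(o_1=0)=2\lvert\mathbf{D}_l\rvert\bar\rho_1^2$, so $OLE(v)=\tfrac12 SE(v)-2\sum_{d\in\mathbf{D}}\rho(d)^2$ directly.)

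The conclusion is then immediate: on the finite threshold set, $WSE$, $RWSE$ and $OLE$ are each of the form $a\cdot SE(v)+b$ with $a>0$, and such maps leave the minimizer set unchanged, so $\arg\min_v SE(v)=\arg\min_v(R)WSE(v)=\arg\min_v OLE(v)$. I do not expect a genuine analytic obstacle here; the only places needing care are bookkeeping ones: keeping track of the factor-of-$2$ discrepancies among the several definitions of the response $r(\cdot)$, and checking that every discarded term ($\sum_{\mathbf{D}}(\rho-\bar\rho)^2$, $\sum_{\mathbf{D}}w r^2$, the constant from Theorem \ref{thm_Regarding_derivative_additive}) truly depends on the current node's full sample set and is therefore constant across all splits $v$.
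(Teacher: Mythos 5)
Your proof is correct and rests on the same substantive facts as the paper's: for the quadratic loss the second derivative is the constant $2$, so the weights are uniform, the weighted averages in Eqn.~\ref{eq_wsqrt_loss_w} collapse to plain averages, and the Newton output is half the average pseudo-response; the $OLE=(R)WSE$ leg is Theorem~\ref{thm_Regarding_derivative_additive} in both treatments. The organization differs slightly: the paper substitutes the Newton outputs $o_{1}=\bar{r_{1}}/2$, $o_{2}=\bar{r_{2}}/2$ into the exact objective loss and manipulates it into the form of Eqn.~\ref{eq_sqrt_loss}, whereas you show directly that each of $WSE$, $RWSE$, $OLE$ is a positive affine transform $a\cdot SE(v)+b$ with $b$ a node constant; your version has the small advantage of making the $WSE$ and $RWSE$ legs explicit (the paper disposes of them with the one-line remark that MART is point-wise based) and of exhibiting the clean identity $OLE(v)=\tfrac{1}{2}SE(v)-2\sum_{d\in\mathbf{D}}\rho(d)^{2}$. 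One harmless bookkeeping slip: since $SE(v)=4\bigl[\sum_{l}(\rho-\bar\rho_{1})^{2}+\sum_{r}(\rho-\bar\rho_{2})^{2}\bigr]$ and the bracketed part of $WSE$ carries the weight $2$, you should get $WSE(v)=\tfrac{1}{2}SE(v)+C_{1}$ rather than $\tfrac{1}{4}SE(v)+C_{1}$; the coefficient is still positive, so the conclusion is unaffected.
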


  \begin{proof}
    Given some chosen feature function $f(\cdot)$ and pseudo-response
    $r(\cdot)$ of each document $d\in\mathbf{D}$, there are $|\mathbf{D}|-1$
    positions to define a threshold which is the middle value of two adjacent
    feature values. 

    We derive from the definition to prove that minimizing objective loss
    is the same with minimizing the least square error splitting principle (Eqn.
    \ref{eq_sqrt_loss}).

    \begin{multline*}
      \begin{array}{rl}
        & \mathcal{L}(v)\\
        = & \sum_{d\in\mathbf{D}_{l}}(f_{t+1}(d)-y(d)){}^{2}+\sum_{d\in\mathbf{D}_{r}}(f_{t+1}(d)-y(d)){}^{2}\\
        \\
        = & \sum_{d\in\mathbf{D}_{l}}((f_{t}(d_{i})+o_{1})-y(d))^{2} + \sum_{d\in\mathbf{D}_{r}}((f_{t}(d)+o_{2})-y(d))^{2}\\
        \\
        & \textrm{ By the Newton formula}\\
        & o_{1}=-\frac{\mathcal{L}'(o_{1}=0)}{\mathcal{L}''(o1=0)}=-\frac{\sum_{d\in\mathbf{D}_{l}}\mathcal{L}'(f_{t}(d))}{2\cdot|\mathbf{D}_{l}|} 
        =\frac{\sum_{d\in\mathbf{D}_{l}}r(d)}{2\cdot|\mathbf{D}_{l}|}=\frac{\bar{r_{1}}}{2}\\
        & o_{2}=-\frac{\mathcal{L}'(o_{2}=0)}{\mathcal{L}''(o_{2}=0)}=-\frac{\sum_{d\in\mathbf{D}_{r}}\mathcal{L}'(f_{t}(d))}{2\cdot|\mathbf{D}_{r}|}
        =\frac{\sum_{d\in\mathbf{D}_{r}}r(d)}{2\cdot|\mathbf{D}_{r}|}=\frac{\bar{r_{2}}}{2}\\
        \\
        & \mathcal{L}(v)\\
        = & \sum_{d\in\mathbf{D}_{l}}(f_{t}(d)+\frac{\bar{r_{1}}}{2}-y(d))^{2}
        +\sum_{d\in\mathbf{D}_{r}}(f_{t}(d)+\frac{\bar{r_{2}}}{2}-y(d))^{2}\\
        \propto & \sum_{d\in\mathbf{D}_{l}}((2f_{t}(d)-2y(d))+\bar{r_{1}})^{2}
        +\sum_{d\in\mathbf{D}_{r}}((2f_{t}(d)-2y(d))+\bar{r_{2}})^{2}\\
        \\
        & \textrm{As }2f_{t}(d)-2y(d)=\mathcal{L}'(f_{t}(d))=-r(d)\\
        & \mathcal{L}(v)\\
        = & \sum_{d\in\mathbf{D}_{l}}(r(d)-\bar{r_{1}})^{2}+\sum_{d\in\mathbf{D}_{r}}(r(d)-\bar{r_{2}})^{2}\\
        = & \textrm{Eqn. \ref{eq_sqrt_loss}}
      \end{array}
    \end{multline*}

    So, optimizing objective loss is equivalent to optimizing SE, and
    as mentioned before MART is a point-wise bases system, which suggests
    $(R)WSE=OLE$.
  \end{proof}

  This theorem means, regarding MART system, the three node-splitting
  principles lead to the same binary partition in any selected tree
  node, and then lead to the same regression tree in current iteration
  of boosting. 

  The key step here is \emph{the average pseudo-response for $\bar{r_{1}}$
  and $\bar{r_{2}}$, whose definition is same with that in Eqn. \ref{eq_sqrt_loss},
  is exactly double of the negative optimum computed by Newton equation
  using the extra second derivative}. Regarding other loss functions,
  this relationship does not necessarily hold. 

  This theorem shows the classic tree fitting algorithm in gradient
  boosting is very suitable for the least-square loss function, on which
  MART system is based, and this could explain why MART system actually
  performs excellently in many practical applications. 

  Besides, there is a by product formula for MART system. 
  By plugging its first-order derivative $\mathcal{L}'(o_{1}=0)=\sum_{d\in\mathbf{D_{l}}}-r(d)$,
  $\mathcal{L}'(o_{2}=0)=\sum_{d\in\mathbf{D_{r}}}-r(d)$, and second-order
  derivative $ $$\mathcal{L}''(o_{1})=2|\mathbf{D}_{l}|$ , $\mathcal{L}''(o_{2})=2|\mathbf{D}_{l}|$
  into Eqn. \ref{eq_OL}, we obtain a simpler splitting principle than Eqn.
  \ref{eq_sqrt_loss}.

  \begin{equation}
    MART(v)=-\left[\sum_{d\in\mathbf{D}_{l}}\frac{r(d)^{2}}{|\mathbf{D}_{l}|}+\sum_{d\in\mathbf{D}_{r}}\frac{r(d)^{2}}{|\mathbf{D}_{r}|}\right]
  \end{equation}

  This form is more intuitive for incremental computation of the optimal
  threshold $v$ from left to right.

  \section{Experiments}

  \subsection{Datasets and Systems}

  As suggested by \cite{qin2010letor}, we use two real-world datasets
  to make our results more stable, Yahoo challenge 2010 and Microsoft
  10K. The statistics of these data sets are reported in Table \ref{tb_dataset}. 
  \begin{enumerate}
    \item \emph{Yahoo Challenge 2010}. After Yahoo corporation hosted this
    far-reaching influence contest of learning to rank in 2010, this dataset
    has been important for a comparison. It contains two sets, and here
    we use the bigger one (set 1). Yahoo dataset was released with only
    one split of training, validating, and testing set, and \emph{we
    add an extra two splits and also report average results}. 

    \item \emph{Microsoft 10K}. Another publicly released datasets, and even
    larger than the Yahoo data in terms of the number of documents. As
    a 5-fold splitting is provided by official release, we report average
    results. 
  \end{enumerate}

  \begin{table}[t]
    \begin{centering}
      \begin{tabular}{c|cccc}
        & \textbf{\footnotesize \#Query} & \textbf{\footnotesize \#Doc.} & \textbf{\footnotesize \#D. / \#Q.} & \textbf{\footnotesize \#Feat.}\tabularnewline
        \hline 
        \textbf{\footnotesize Yahoo} & {\footnotesize 20K} & {\footnotesize 473K} & {\footnotesize 23} & {\footnotesize 519}\tabularnewline
        \textbf{\footnotesize Micro-10K} & {\footnotesize 6K} & {\footnotesize 723K} & {\footnotesize 120} & {\footnotesize 136}\tabularnewline
        \hline 
        \textbf{\footnotesize McRank ((\cite{li2007mcrank}))} & {\footnotesize 10-26K} & {\footnotesize 474-1741K} & {\footnotesize 18-88} & {\footnotesize 367-619}\tabularnewline
        \textbf{\footnotesize $\lambda$-MART ((\cite{wu2010adapting}))} & {\footnotesize 31K} & {\footnotesize 4154K} & {\footnotesize 134} & {\footnotesize 416}\tabularnewline
        \textbf{\footnotesize Ohsumed} & {\footnotesize 106} & {\footnotesize 16K} & {\footnotesize 150} & {\footnotesize 45}\tabularnewline
        \textbf{\footnotesize letor 4.0} & {\footnotesize 2.4K} & {\footnotesize 85K} & {\footnotesize 34} & {\footnotesize 46}\tabularnewline
      \end{tabular}
      \par
    \end{centering}

    \protect\caption{
      The top two datasets are used in this work and others are just as
      reference. $\lambda$-MART is LambdaMART. Ohsumed belongs to letor 3.0.
      \#D./\#Q. means average document number per query.
    }
    \label{tb_dataset}
  \end{table}

  The two datasets above were empirically found to be different. The
  Microsoft dataset seems more difficult than Yahoo as some models are
  reportedly running badly on it (\cite{tan2013direct}). It has comparatively
  less features, 136, and larger average number of documents per query
  120, compared to 519 and 23 of Yahoo. The two real-world datasets
  should be capable of providing convincing results.

  \begin{table*}
    \begin{centering}
      \begin{tabular}{c|c|c|cccc}
        \multicolumn{1}{c}{} &  &  & \multicolumn{4}{c}{\textbf{\footnotesize{}Yahoo challenge 2010}}\tabularnewline
        \hline 
        & \textbf{\footnotesize{}\#leaf} & \textbf{\footnotesize{}$\alpha$} & \textbf{\footnotesize{}NDCG@1} & \textbf{\footnotesize{}NDCG@3} & \textbf{\footnotesize{}NDCG@10} & \textbf{\footnotesize{}ERR}\tabularnewline
        \hline 
        &  & {\footnotesize{}0.06} & {\footnotesize{}71.32/}\textbf{\footnotesize{}71.76} & {\footnotesize{}71.47/}\textbf{\footnotesize{}72.22} & {\footnotesize{}77.97/}\textbf{\footnotesize{}78.52} & {\footnotesize{}45.44/}\textbf{\footnotesize{}45.71}\tabularnewline
        & {\footnotesize{}10} & {\footnotesize{}0.10} & {\footnotesize{}71.38/}\textbf{\footnotesize{}71.73} & {\footnotesize{}71.82/}\textbf{\footnotesize{}72.37} & {\footnotesize{}78.25/}\textbf{\footnotesize{}78.67} & {\footnotesize{}45.52/}\textbf{\footnotesize{}45.77}\tabularnewline
        \textbf{\footnotesize{}McRank} &  & {\footnotesize{}0.12} & {\footnotesize{}71.67/}\textbf{\footnotesize{}71.87} & {\footnotesize{}71.96/}\textbf{\footnotesize{}72.52} & {\footnotesize{}78.33/}\textbf{\footnotesize{}78.77} & {\footnotesize{}45.59/}\textbf{\footnotesize{}45.79}\tabularnewline
        \cline{2-7} 
        &  & {\footnotesize{}0.06} & {\footnotesize{}71.52/}\textbf{\footnotesize{}71.90} & {\footnotesize{}71.80/}\textbf{\footnotesize{}72.60} & {\footnotesize{}78.26/}\textbf{\footnotesize{}78.84} & {\footnotesize{}45.54/}\textbf{\footnotesize{}45.83}\tabularnewline
        & {\footnotesize{}20} & {\footnotesize{}0.10} & {\footnotesize{}71.65/}\textbf{\footnotesize{}72.03} & {\footnotesize{}72.14/}\textbf{\footnotesize{}72.77} & {\footnotesize{}78.50/}\textbf{\footnotesize{}79.00} & {\footnotesize{}45.64/}\textbf{\footnotesize{}45.88}\tabularnewline
        &  & {\footnotesize{}0.12} & {\footnotesize{}71.80/}\textbf{\footnotesize{}72.10} & {\footnotesize{}72.23/}\textbf{\footnotesize{}72.79} & {\footnotesize{}78.58/}\textbf{\footnotesize{}78.99} & {\footnotesize{}45.66/}\textbf{\footnotesize{}45.89}\tabularnewline
        \hline 
        &  & {\footnotesize{}0.06} & {\footnotesize{}71.15/}\textbf{\footnotesize{}71.62} & {\footnotesize{}71.60/}\textbf{\footnotesize{}71.94} & {\footnotesize{}77.82/}\textbf{\footnotesize{}78.15} & {\footnotesize{}45.66/}\textbf{\footnotesize{}45.80}\tabularnewline
        & {\footnotesize{}10} & {\footnotesize{}0.10} & {\footnotesize{}71.29/}\textbf{\footnotesize{}71.81} & {\footnotesize{}71.76/}\textbf{\footnotesize{}72.11} & {\footnotesize{}77.96/}\textbf{\footnotesize{}78.29} & {\footnotesize{}45.72/}\textbf{\footnotesize{}45.90}\tabularnewline
        \textbf{\footnotesize{}LambdaMART} &  & {\footnotesize{}0.12} & {\footnotesize{}71.30/}\textbf{\footnotesize{}71.76} & {\footnotesize{}71.76/}\textbf{\footnotesize{}72.19} & {\footnotesize{}77.93/}\textbf{\footnotesize{}78.34} & {\footnotesize{}45.67/}\textbf{\footnotesize{}45.87}\tabularnewline
        \cline{2-7} 
        &  & {\footnotesize{}0.06} & {\footnotesize{}71.51/}\textbf{\footnotesize{}71.75} & {\footnotesize{}72.02/}\textbf{\footnotesize{}72.25} & {\footnotesize{}78.13/}\textbf{\footnotesize{}78.40} & {\footnotesize{}45.77/}\textbf{\footnotesize{}45.92}\tabularnewline
        & {\footnotesize{}20} & {\footnotesize{}0.10} & {\footnotesize{}71.37/}\textbf{\footnotesize{}72.10} & {\footnotesize{}71.92/}\textbf{\footnotesize{}72.56} & {\footnotesize{}78.04/}\textbf{\footnotesize{}78.58} & {\footnotesize{}45.72/}\textbf{\footnotesize{}46.02}\tabularnewline
        &  & {\footnotesize{}0.12} & {\footnotesize{}71.44/}\textbf{\footnotesize{}71.76} & {\footnotesize{}71.91/}\textbf{\footnotesize{}72.39} & {\footnotesize{}78.06/}\textbf{\footnotesize{}78.57} & {\footnotesize{}45.71/}\textbf{\footnotesize{}45.96}\tabularnewline
        \hline 
      \end{tabular}
      \par
    \end{centering}

    \protect\caption{
      Performances (\%) of SE / OLE in the Yahoo Data. \emph{All results
      reported are averaged over self-defined three-fold.} All results with
      over 0.1 point improvement are marked.
    }

    \label{tb_full_table_all}
  \end{table*}

  \par

  \begin{table*}
    \begin{centering}
      \begin{tabular}{c|c|c|cccc}
        \multicolumn{1}{c}{} &  &  & \multicolumn{4}{c}{\textbf{\footnotesize{}Microsoft 10K}}\tabularnewline
        \hline 
        & \textbf{\footnotesize{}\#leaf} & \textbf{\footnotesize{}$\alpha$} & \textbf{\footnotesize{}NDCG@1} & \textbf{\footnotesize{}NDCG@3} & \textbf{\footnotesize{}NDCG@10} & \textbf{\footnotesize{}ERR}\tabularnewline
        \hline 
        &  & {\footnotesize{}0.06} & {\footnotesize{}47.43/47.14} & {\footnotesize{}46.14/}\textbf{\footnotesize{}46.46} & {\footnotesize{}48.60/}\textbf{\footnotesize{}49.17} & {\footnotesize{}35.90/}\textbf{\footnotesize{}36.13}\tabularnewline
        & {\footnotesize{}10} & {\footnotesize{}0.10} & {\footnotesize{}47.49/}\textbf{\footnotesize{}47.69} & {\footnotesize{}46.41/}\textbf{\footnotesize{}46.79} & {\footnotesize{}49.00/}\textbf{\footnotesize{}49.47} & {\footnotesize{}36.13/}\textbf{\footnotesize{}36.32}\tabularnewline
        \textbf{\footnotesize{}McRank} &  & {\footnotesize{}0.12} & {\footnotesize{}47.42/}\textbf{\footnotesize{}47.67} & {\footnotesize{}46.50/}\textbf{\footnotesize{}46.78} & {\footnotesize{}49.02/}\textbf{\footnotesize{}49.51} & {\footnotesize{}36.13/}\textbf{\footnotesize{}36.30}\tabularnewline
        \cline{2-7} 
        &  & {\footnotesize{}0.06} & {\footnotesize{}47.69/}\textbf{\footnotesize{}47.94} & {\footnotesize{}46.44/}\textbf{\footnotesize{}47.06} & {\footnotesize{}49.07/}\textbf{\footnotesize{}49.67} & {\footnotesize{}36.09/}\textbf{\footnotesize{}36.45}\tabularnewline
        & {\footnotesize{}20} & {\footnotesize{}0.10} & {\footnotesize{}47.52/}\textbf{\footnotesize{}48.04} & {\footnotesize{}46.76/}\textbf{\footnotesize{}47.24} & {\footnotesize{}49.36/}\textbf{\footnotesize{}49.80} & {\footnotesize{}36.26/}\textbf{\footnotesize{}36.54}\tabularnewline
        &  & {\footnotesize{}0.12} & {\footnotesize{}47.87/47.89} & {\footnotesize{}46.86/}\textbf{\footnotesize{}47.12} & {\footnotesize{}49.51/}\textbf{\footnotesize{}49.68} & {\footnotesize{}36.34/}\textbf{\footnotesize{}36.45}\tabularnewline
        \hline 
        &  & {\footnotesize{}0.06} & {\footnotesize{}47.42/}\textbf{\footnotesize{}47.98} & {\footnotesize{}46.21/}\textbf{\footnotesize{}46.54} & {\footnotesize{}48.22/}\textbf{\footnotesize{}48.70} & {\footnotesize{}36.44/}\textbf{\footnotesize{}36.68}\tabularnewline
        & {\footnotesize{}10} & {\footnotesize{}0.10} & {\footnotesize{}47.79/47.64} & {\footnotesize{}46.55/46.57} & {\footnotesize{}48.57/}\textbf{\footnotesize{}48.85} & {\footnotesize{}36.54/}\textbf{\footnotesize{}36.71}\tabularnewline
        \textbf{\footnotesize{}LambdaMART} &  & {\footnotesize{}0.12} & {\footnotesize{}47.45/}\textbf{\footnotesize{}47.79} & {\footnotesize{}46.32/}\textbf{\footnotesize{}46.62} & {\footnotesize{}48.54/}\textbf{\footnotesize{}48.94} & {\footnotesize{}36.42/}\textbf{\footnotesize{}36.67}\tabularnewline
        \cline{2-7} 
        &  & {\footnotesize{}0.06} & {\footnotesize{}48.01/}\textbf{\footnotesize{}48.19} & {\footnotesize{}46.52/}\textbf{\footnotesize{}46.87} & {\footnotesize{}48.78/}\textbf{\footnotesize{}49.13} & {\footnotesize{}36.74/}\textbf{\footnotesize{}36.88}\tabularnewline
        & {\footnotesize{}20} & {\footnotesize{}0.10} & {\footnotesize{}47.99/48.07} & {\footnotesize{}46.66/}\textbf{\footnotesize{}46.79} & \textbf{\footnotesize{}48.95/49.10} & {\footnotesize{}36.72/36.81}\tabularnewline
        &  & {\footnotesize{}0.12} & {\footnotesize{}47.63/47.67} & {\footnotesize{}46.69/46.51} & {\footnotesize{}49.02/49.03} & {\footnotesize{}36.70/36.61}\tabularnewline
        \hline 
      \end{tabular}
      \par
    \end{centering}

    \protect\caption{
      Performances (\%) of SE / OLE in the Microsoft 10K. \emph{All results
      reported are averaged over standard five-fold cross-validation respectively.}
      All results with over 0.1 point improvement are marked.
    }
    \label{tb_full_table_all_1}
  \end{table*}

  As (R)WSE has been shown to be a special case of OLE, we only compare
  SE and OLE in the scenario of learning to rank. We adopt two famous
  ranking systems with regression trees as weak learners. To be consistent 
  in implementation details, we used the same code template.
  Their differences are only from objective loss functions and regression
  tree fitting principles. 

  1. \emph{point-wise based McRank} (\cite{li2007mcrank}).
  The multi-class classification based system was reported to be strong
  in real-world datasets (\cite{wu2010adapting}), and is natural to be
  one of our baseline systems. 

  2. \emph{pair-wise based LambdaMART} (\cite{wu2010adapting}).
  This famous pair-wised system gained its reputation in Yahoo Challenge
  2010, as a combined system, mainly constructed on LambdaMART, winning
  the championship. In our work, we only compare with single LambdaMART
  systems, which are trained using NDCG loss. Maybe LambdaMART systems
  could be improved further using different configurations, but here
  it is not our concern.

  As shown by the proof, MART is an ideal intersection of these ideas,
  we do not use this system. For each system and algorithm,
  we set configurations as follows: the number of leaves is set as 10,
  20; the learning rate $\alpha$ in Eqn. \ref{eq_update} is set
  as 0.06, 0.1, 0.12. \emph{So there are six configurations for each
  system}. After examining the testing performance in the real-world datasets,
  we observed several hundreds of iterations (or regression trees) could almost lead to
  convergence, so we just set maximum number of iterations to 1000 for
  LambdaMART, and 2500 for McRank, as the latter converges more slowly. We report
  popular measures, \emph{NDCG@(1, 3, 10)} and \emph{ERR}. 

  \subsection{Experimental Comparison on Two Systems}

  Instead of providing limited testing results using the best parameter 
  from a validating data, we provide two kinds of testing results, one from
  converged training, anther from the whole training procedure.

  First, in Table \ref{tb_full_table_all} and \ref{tb_full_table_all_1}, we
  compare exact performances of SE and OLE for six configurations at
  predefined iteration. Empirically, training in large datasets, systems are easy
  to converge after sufficient iterations. Second, in Figure \ref{fig_impr_all}, we further
  provide a complete comparison in a whole training with
  difference configurations.

  \begin{figure*}[tp]
    \begin{centering}
      \includegraphics[scale=0.20]{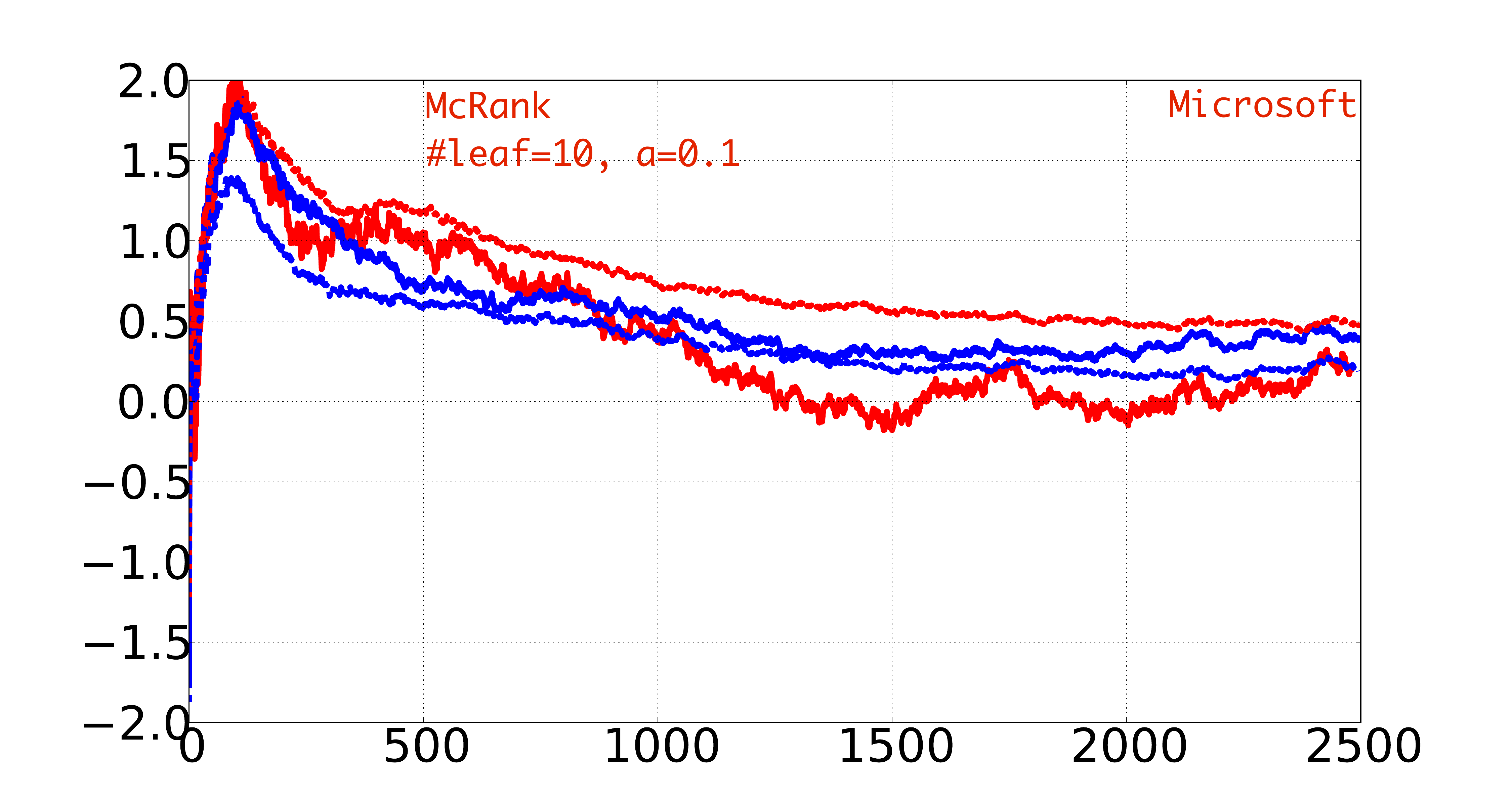}
      \includegraphics[scale=0.20]{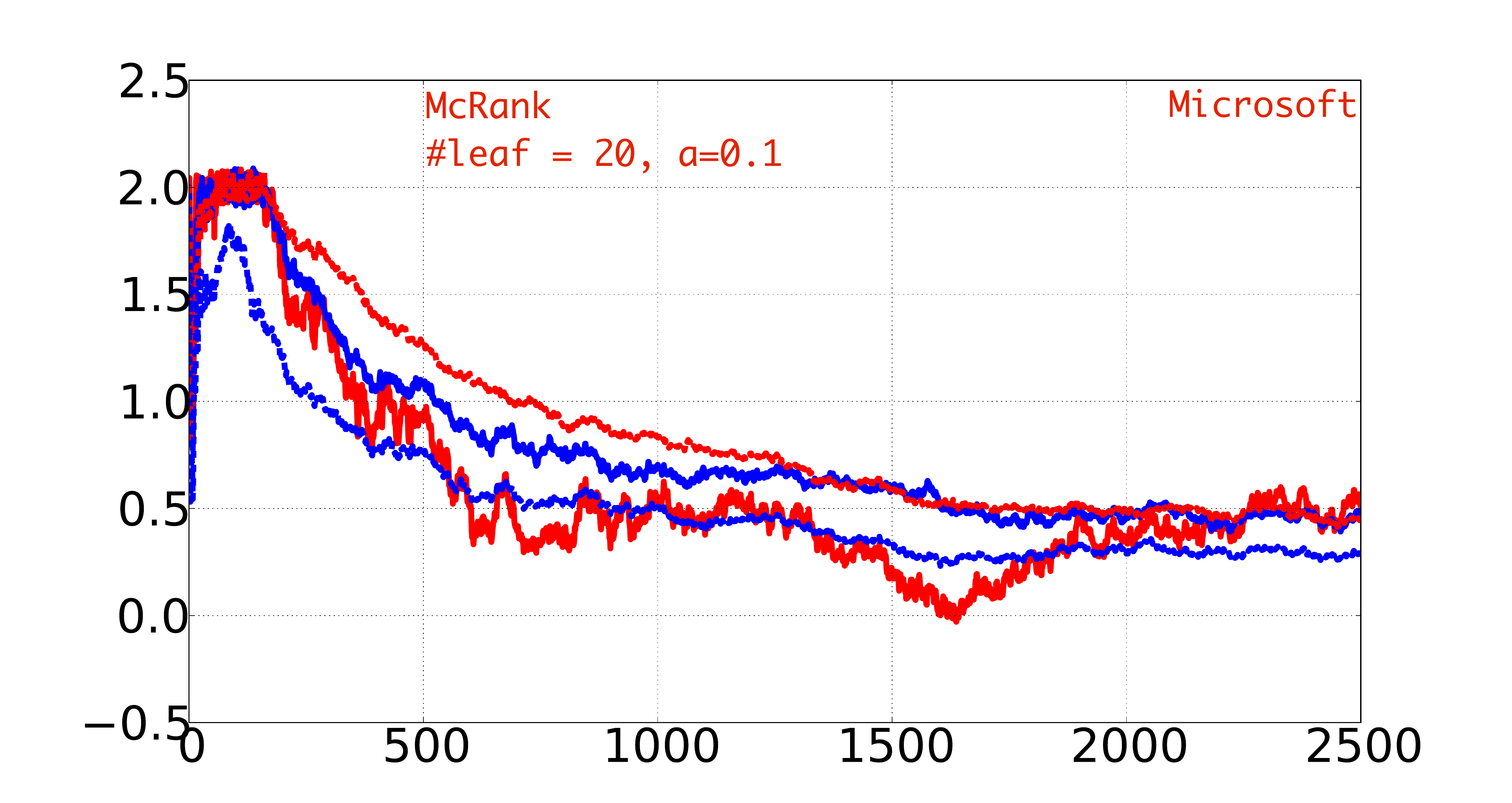}
      \par
      \includegraphics[scale=0.20]{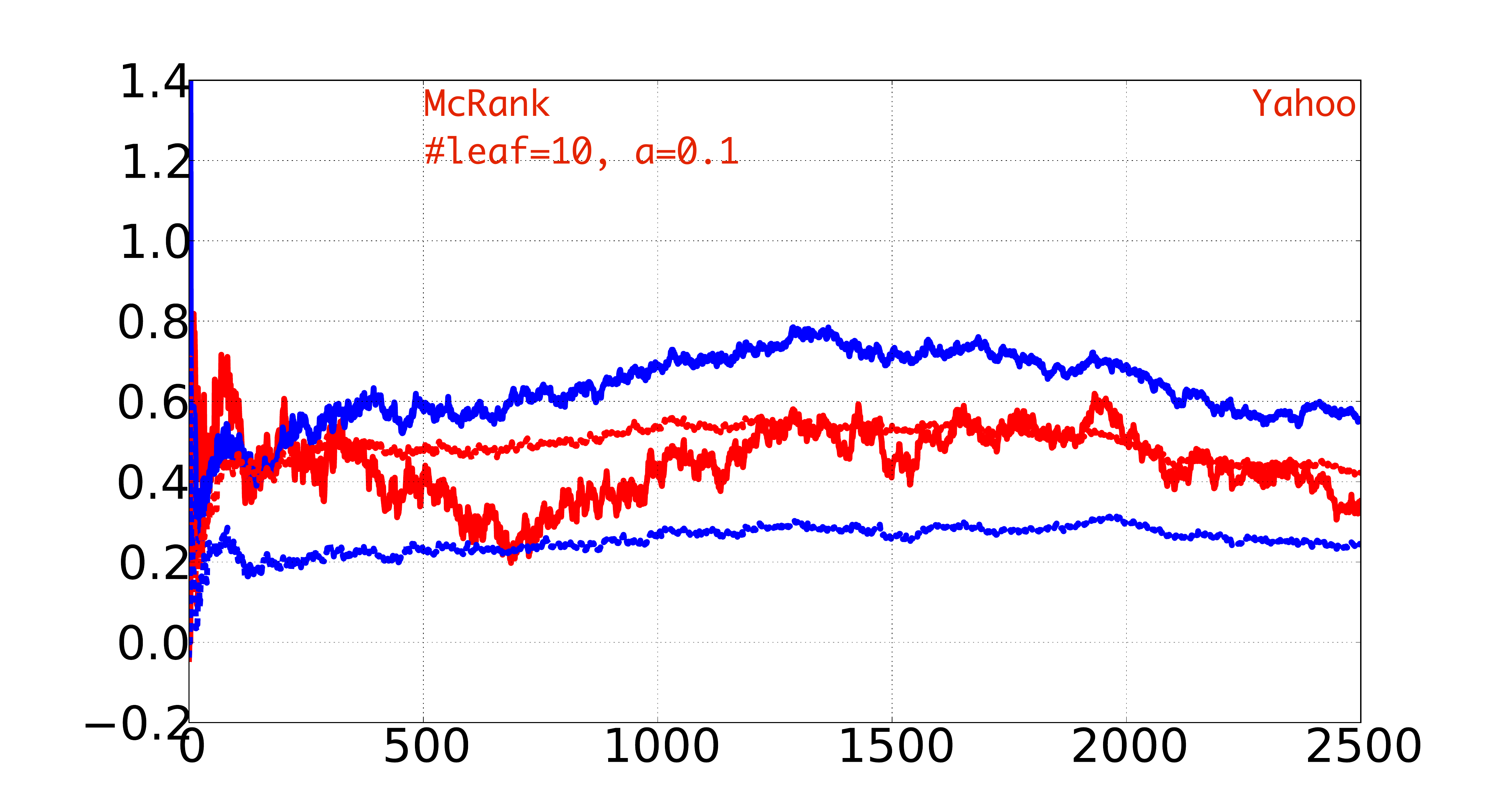}
      \includegraphics[scale=0.20]{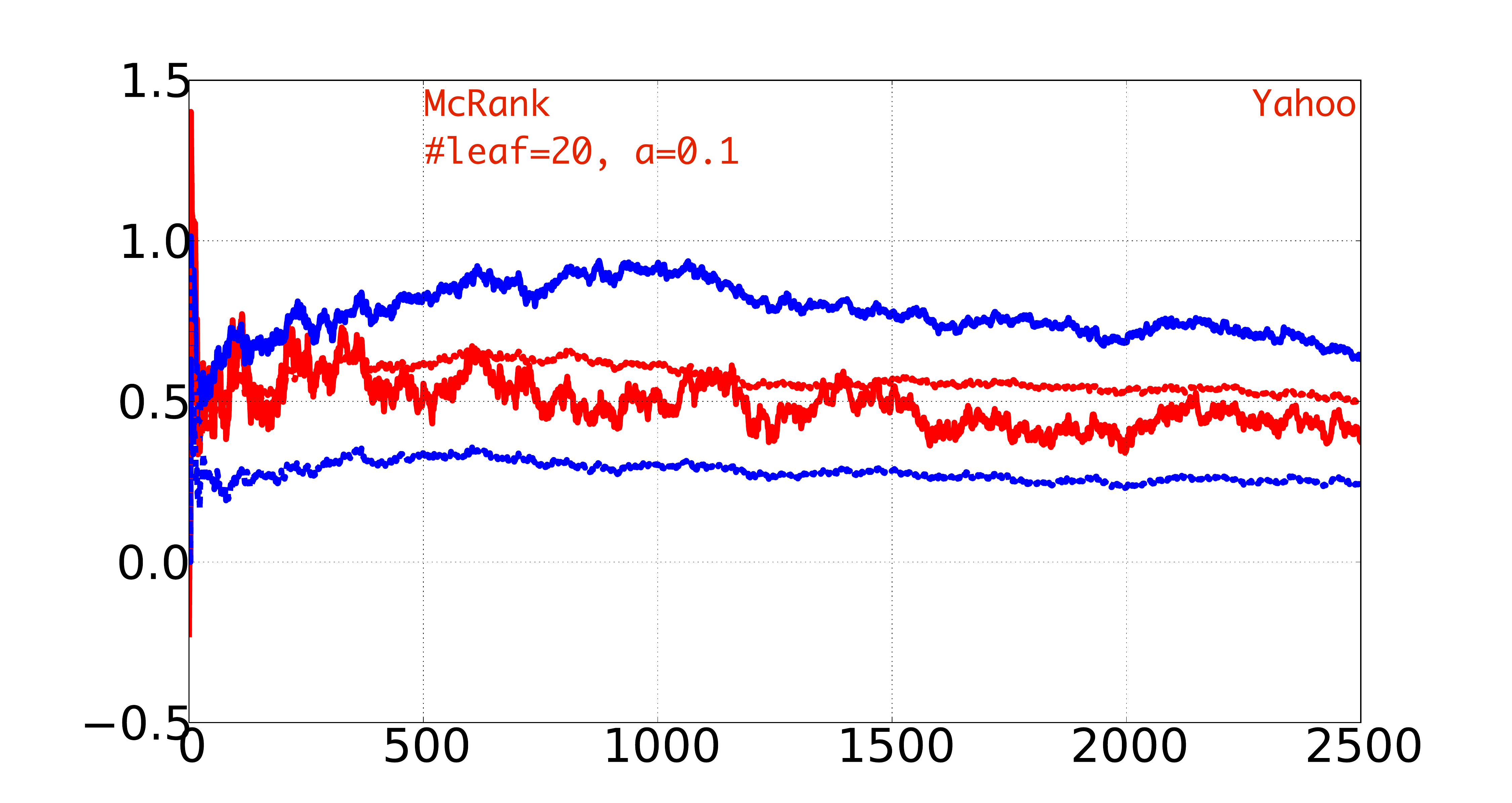}
      \par

      \includegraphics[scale=0.20]{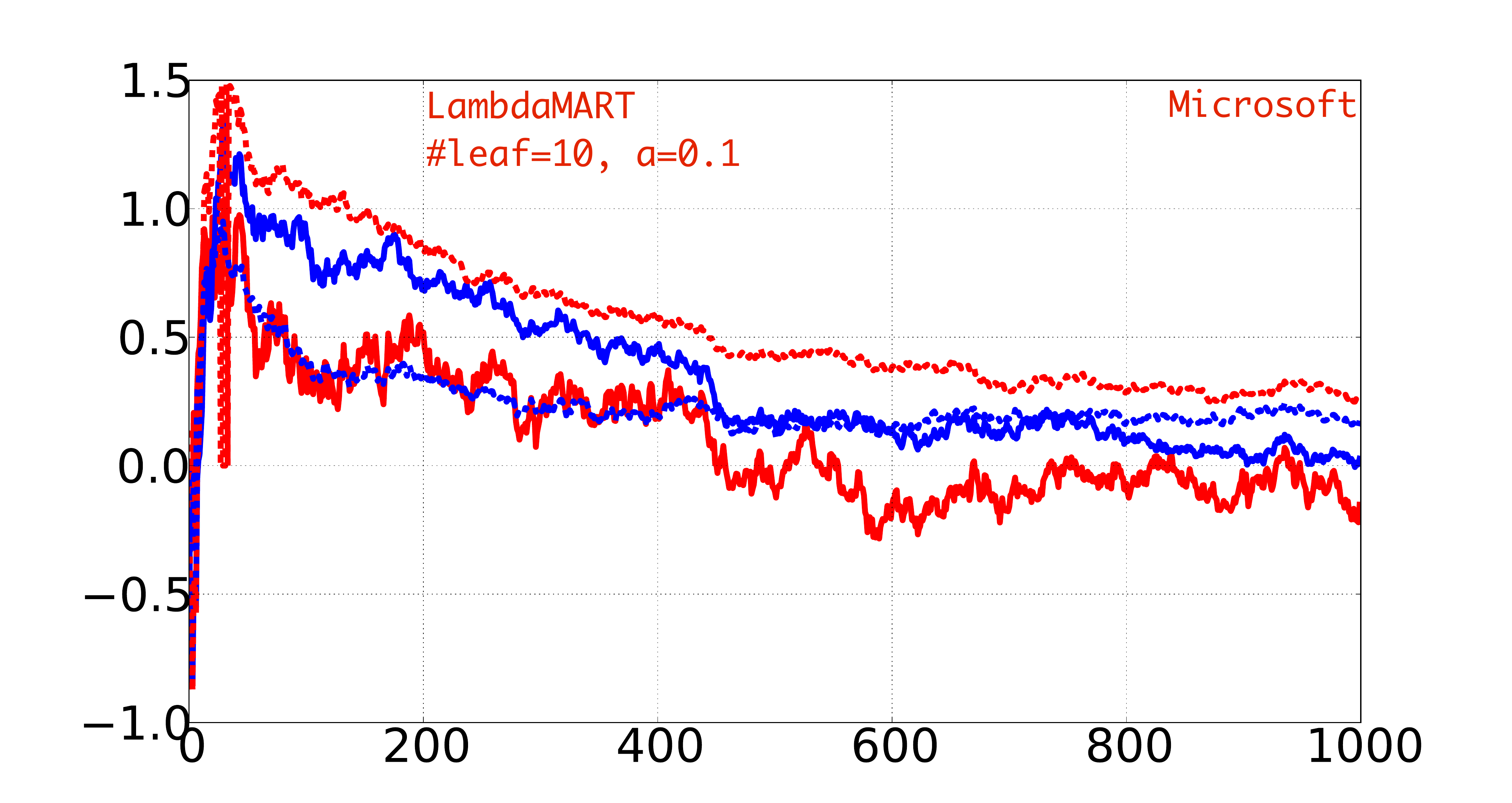}
      \includegraphics[scale=0.20]{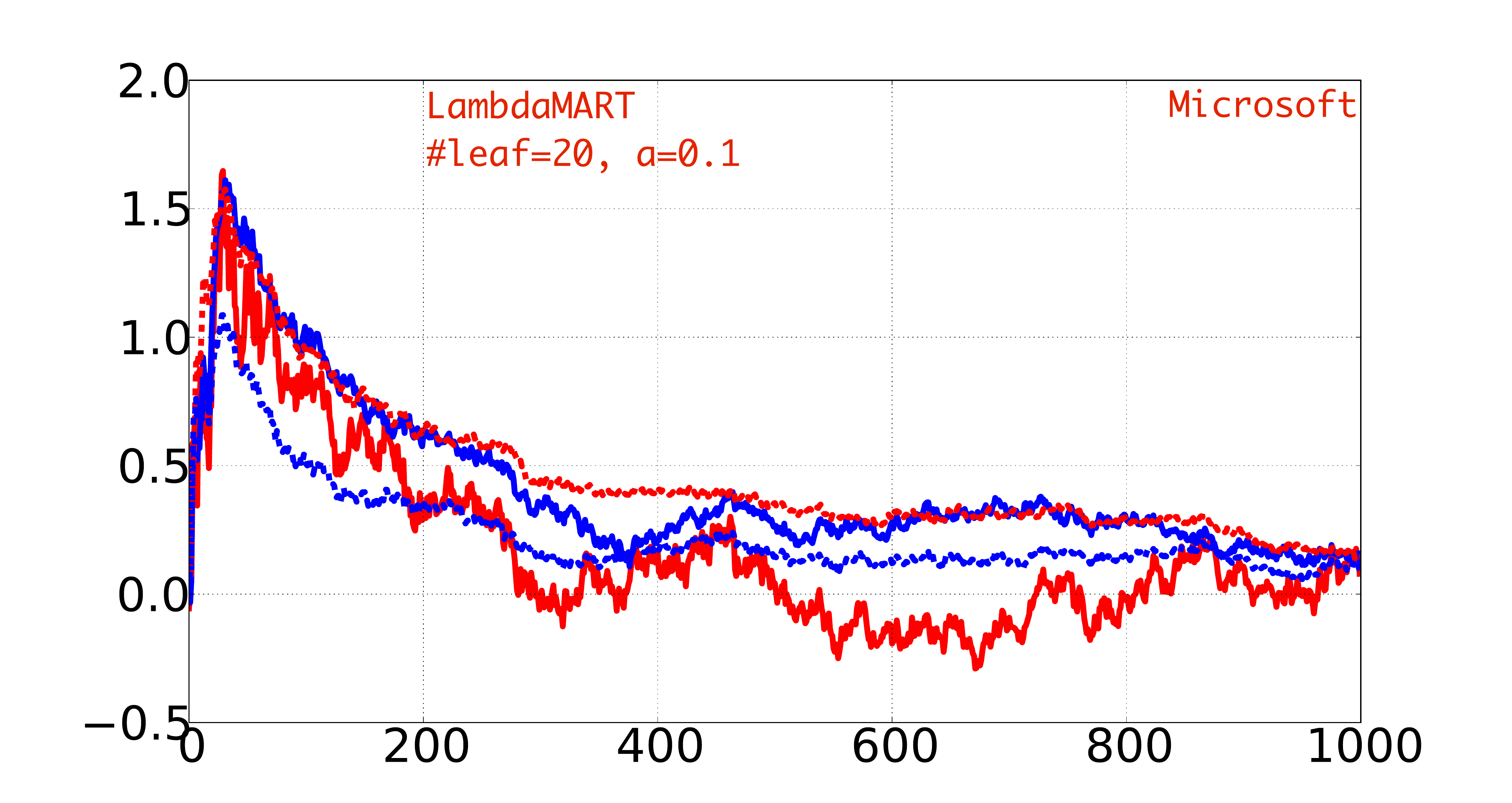}
      \par
      \includegraphics[scale=0.20]{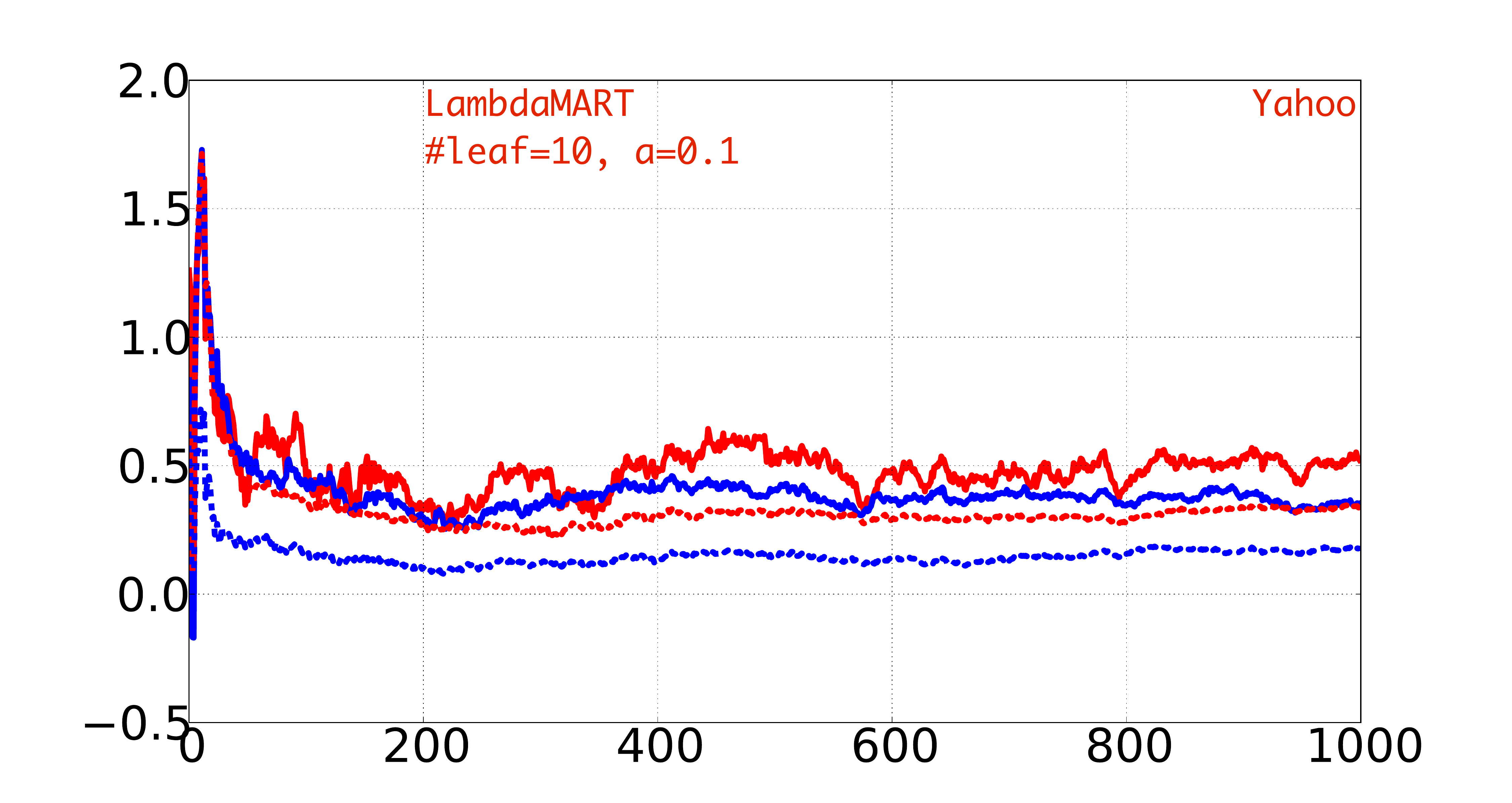}
      \includegraphics[scale=0.20]{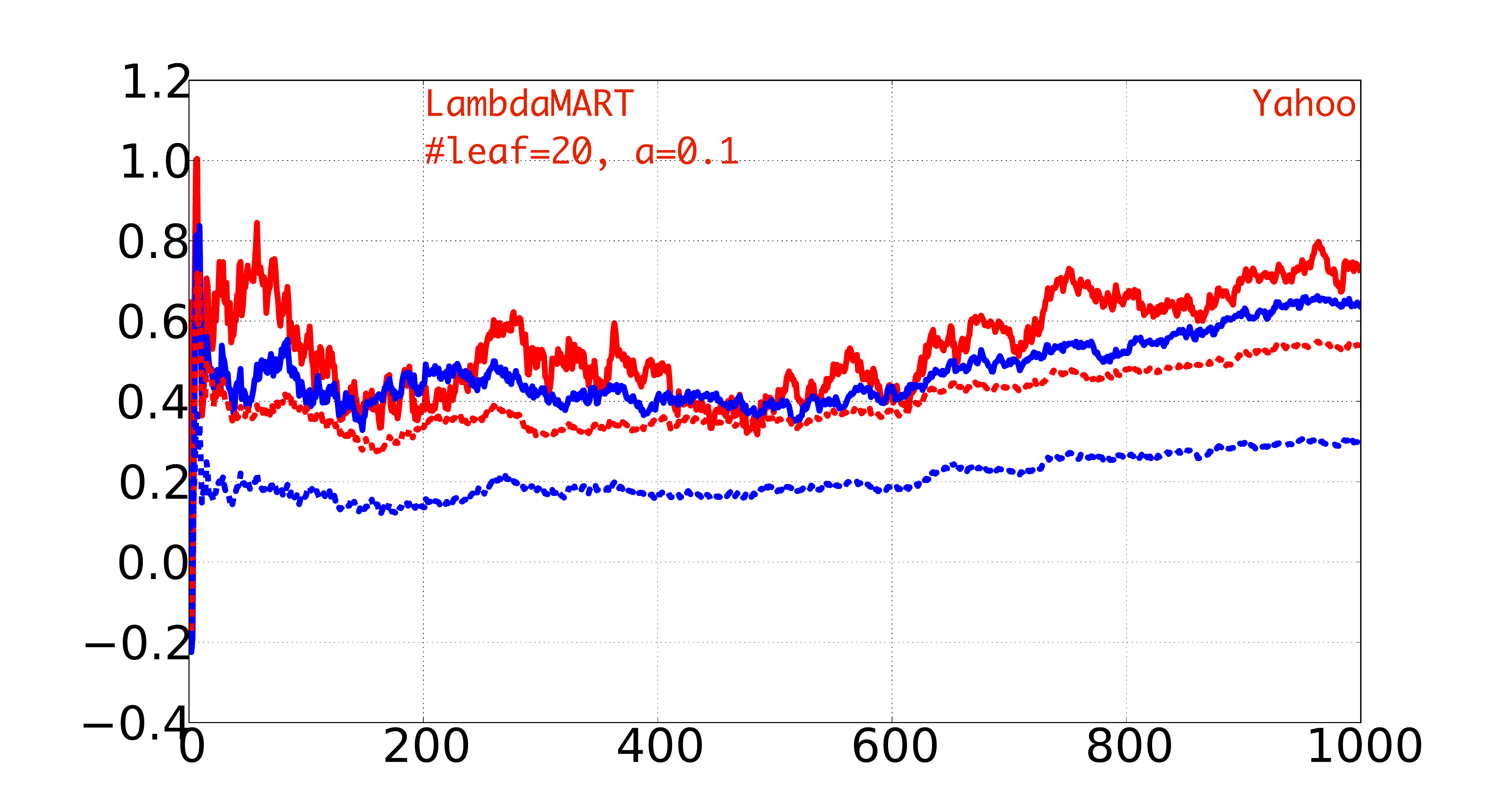}
      \par

      \includegraphics[scale=0.5]{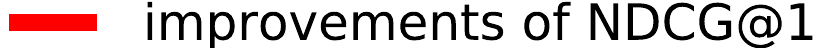}
      \includegraphics[scale=0.5]{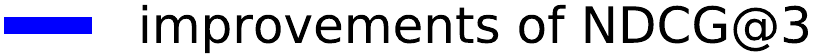}
      \includegraphics[scale=0.5]{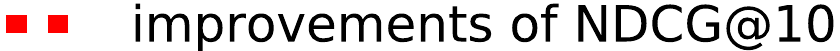}
      \includegraphics[scale=0.5]{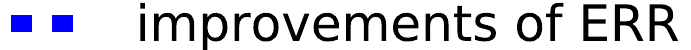}
      \par
    \end{centering}

    \protect\caption{
      Improvements (absolute \%) of OLE over SE when the learning rate is
      set as 0.06, 0.1, 0.12. 
      The X-axis denotes the iteration number, or the number of regression tree
      fitted.  
      The Y-axis denotes the difference that measure score from OLE minus that
      from SE for NDCG@(1, 3, 10) and ERR respectively.
      \emph{Each point in the figures has been averaged over a five-fold or three-fold cross-validation}. 
    }

    \label{fig_impr_all}
  \end{figure*}

  In table \ref{tb_full_table_all} and \ref{tb_full_table_all_1},
  for McRank and LambdaMART respectively, among 72 comparisons, OLE
  gains 68 and 56 improvements for at least over 0.1 point, and most
  of them are 0.3 to 0.4. These improvements are reasonable, as our
  baselines are strong, and in such large datasets. These statistics
  are based on six typical configurations, and demonstrate OLE is workable
  for the McRank and LambdaMART models in a general case. 

  We further analyze four measures separately, NDCG@(1, 3, 10) and ERR.
  Though both McRank (Figure \ref{fig_impr_all}, shown in the complementary
  material due to the space limit) and LambdaMART have been improved
  consistently with OLE, the NDCG@1 (real read line) and ERR (dotted
  blue line) have relatively smaller improvements. ERR is more difficult
  to improve than NDCG. 

  Improvements of NDCG@3 and NDCG@10 on McRank and LambdaMART are more
  robust in a variety of configurations. As NDCG@1 is computed on the
  first document predicted by models, and ERR is computed on the whole
  of ranking documents whose numbers are usually several dozens, in
  practice, the first page with 10 links returned by a search engine
  are more desired by users. So we think it may be more useful to improve
  NDCG@3 and NDCG@10 measures.

  As OLE is supposed to have a faster convergence than SE, we also have
  a statistics of objective losses in the final iteration. OLE indeed
  leads to smaller objective losses, but not by that much, about 0.32\%
  - 1\%. As this work only focuses on the splitting rule in a single
  node, we also tried different strategies to generate node. Width-first
  search and depth-first search. Interestingly, depth-first search runs
  poorly for both baselines and our method. This is an open question
  and left to future exploration. We thus adopted the width-first search
  and limits the number of leaves. 

  Regarding the running time, there is no loss for the systems with
  derivative additive objective losses compared to SE in gradient boosting.
  Typical such systems are point-wise based. But for pair-wise and list-wise
  based, OLE suffers from extra overheads of maintaining exact second
  derivatives of objective loss function. In an incremental updating
  style, this overhead is about 30\% of SE for pair-wise based, and
  regarding list-wise, there may be more.

  \section{Conclusion}

  In this paper, we propose a minimum objective loss based tree construction
  algorithm in the boosting framework, and analyze two existent tree
  construction principles, least square error and (robust) weighted square error. The
  former is widely used in the gradient boosting and practical learning to rank
  systems, while the latter is famous in LogitBoost and classification area.
  We successful build a relationship between our method and WSE in LogitBoost.
  We prove that WSE is just a special case of our method. This provides a theoretical support
  for (robust) LogitBoost and point-wise based ranking systems. Based
  on our analysis, we show MART is an ideal connection to SE, WSE, and
  OLE, and obtain a more concise formula for MART. Finally, for a
  full empirical comparison of the three principles, we implement
  two strong ranking systems, and examine them with a variety
  of configurations of regression trees in two largest public datasets.
  Our results indicate that our proposed method is better used
  for McRank, LambdaMART and MART systems.

  \bibliographystyle{plain}
  \nocite{*}
  \bibliography{references}
\end{document}